\newcommand{\ket}[1]{\left\vert{#1}\right\rangle}
\newcommand{\qw}[1][-1]{\ar @{-} [0,#1]}
\newcommand{\qwx}[1][-1]{\ar @{-} [#1,0]}
\newcommand{\cw}[1][-1]{\ar @{=} [0,#1]}
\newcommand{\cwx}[1][-1]{\ar @{=} [#1,0]}
\newcommand{\gate}[1]{*{\xy *+<.6em>{#1};p\save+LU;+RU **\dir{-}\restore\save+RU;+RD **\dir{-}\restore\save+RD;+LD **\dir{-}\restore\POS+LD;+LU **\dir{-}\endxy} \qw}
\newcommand{\meter}{\gate{\xy *!<0em,1.1em>h\cir<1.1em>{ur_dr},!U-<0em,.4em>;p+<.5em,.9em> **h\dir{-} \POS <-.6em,.4em> *{},<.6em,-.4em> *{} \endxy}}
\newcommand{\control}{*!<0em,.025em>-=-{\bullet}}
\newcommand{\controlo}{*-<.21em,.21em>{\xy *=<.59em>!<0em,-.02em>[o][F]{}\POS!C\endxy}}
\newcommand{\ctrl}[1]{\control \qwx[#1] \qw}
\newcommand{\ctrlo}[1]{\controlo \qwx[#1] \qw}
\newcommand{\multigate}[2]{*+<1em,.9em>{\hphantom{#2}} \qw \POS[0,0].[#1,0];p !C *{#2},p \save+LU;+RU **\dir{-}\restore\save+RU;+RD **\dir{-}\restore\save+RD;+LD **\dir{-}\restore\save+LD;+LU **\dir{-}\restore}
\newcommand{\ghost}[1]{*+<1em,.9em>{\hphantom{#1}} \qw}
\newcommand{\push}[1]{*{#1}}
\newcommand{\gategroup}[6]{\POS"#1,#2"."#3,#2"."#1,#4"."#3,#4"!C*+<#5>\frm{#6}}
\newcommand{\lstick}[1]{*!R!<.5em,0em>=<0em>{#1}}
\newcommand{\ustick}[1]{*!D!<0em,-.5em>=<0em>{#1}}
\newcommand{\Qcircuit}[1][0em]{\xymatrix @*[o] @*=<#1>}
\newcommand{\quotes}[1]{``#1''}
\newcommand{\comment}[1]{}
\newcommand{\braket}[2]{{\langle {#1}\!\mid\!{#2} \rangle}}
\newcommand{\Hilbert}{{\cal H}}
\newcommand{\GoodSetSize}[1]{2^{\lceil\log((2/\epsilon)\ln{2#1})\rceil}}
\newtheorem{definition}{Definition}
\newtheorem{lemma}{Lemma}
\newtheorem{corollary}{Corollary}
\newtheorem{theorem}{Theorem}
\begin{document}

\title{On Computational Power of Quantum\\ Read-Once Branching Programs}
\author{Farid Ablayev\thanks{Work was in part supported by the Russian Foundation for Basic Research
    under the grants 08-07-00449, 09-01-97004}
\institute{Kazan Federal University,\\ Institute for Informatics of Tatarstan Academy of Sciences\\
Kazan, Russian Federation} \email{fablayev@gmail.com}
\and Alexander Vasiliev
\thanks{Work was in part supported by the Russian Foundation for Basic Research
    under the grants 08-07-00449, 09-01-97004 and by Tatarstan Republic under the youth grant 01-5/2010(G).}
    \institute{Kazan Federal University,\\ Institute
for Informatics of Tatarstan Academy of Sciences\\ Kazan, Russian Federation}
\email{Alexander.KSU@gmail.com} }
\def\titlerunning{Quantum Branching Programs}
\def\authorrunning{F. Ablayev \& A. Vasiliev}

\maketitle

\begin{abstract}
In this paper we review our current results concerning the computational power of quantum read-once
branching programs. First of all, based on the circuit presentation of quantum branching programs
and our variant of quantum fingerprinting technique, we show that any Boolean function with linear
polynomial presentation can be computed by a quantum read-once branching program using a relatively
small (usually logarithmic in the size of input) number of qubits. Then we show that the described
class of Boolean functions is closed under the polynomial projections.
\end{abstract}


\section{Introduction}

One notable thing about the recent realizations of a quantum computer (say, the one based on
multiatomic ensembles in resonator \cite{Moiseev:2010:multi-ensembleQC}) is that they abide the
isolation of a quantum system, e.g. in \cite{Moiseev:2010:multi-ensembleQC} the transformation of a
quantum state is performed by an external magnetic field. Thus it is quite adequate to describe
such computations by quantum models with classical control. Here we consider one of such models --
the model of \emph{quantum branching programs} introduced by Ablayev, Gainutdinova, Karpinski
\cite{Ablayev-Gainutdinova-Karpinski:2001:QBP} (\emph{leveled programs}), and by Nakanishi,
Hamaguchi, Kashiwabara \cite{Nakanishi:2000:QBP} (\emph{non-leveled programs}). Later it was shown
by Sauerhoff \cite{Sauerhoff-Sieling:2005:QBP} that these two models are polynomially equivalent.

It is also worth noting that in spite of constant progress in experimental quantum computation all
of the physical implementations of a quantum computer are still rather weak in a sense that they
are suffering from fast decoherence of the quantum states and are able to organize the interaction of a
small number of qubits. This naturally leads to the restricted variants of a quantum computer --
the idea first proposed by Ambainis and Freivalds in 1998 \cite{Ambainis-Freivalds:1998:QFA}.
Considering one-way quantum finite automata, they suggested that the first quantum-mechanical computers
would consist of a comparatively simple and fast quantum-mechanical part connected to a classical
computer.

In this paper we consider a restricted model of computation known as \emph{Quantum Read-Once
Branching Programs} of polynomial width. The classical variant of this model is also known in
computer science as Ordered Binary Decision Diagrams (OBDDs) and that is why we will also use the
notion of quantum OBDDs (QOBDDs) for the considered model. The small coherence time is formalized
in this model by allowing only a single test of each variable. The restriction of polynomial width
of quantum OBDDs leads to the number of qubits which is logarithmic in the size of the input. On
the other hand, the generalized lower bound on the width of quantum OBDDs
\cite{Ablayev-et-al:2005:QBP} leads to logarithmic number of qubits as a lower bound for almost all
Boolean functions in the OBDD setting.

For the model of quantum OBDDs we develop the \emph{fingerprinting} technique introduced in
\cite{ablayev-vasiliev:2008:ECCC}. The basic ideas of this approach are due to Freivalds (e.g. see
the book \cite{Motwani:1995:Randomized-algorithms}). These ideas were later successfully applied in
the \emph{quantum automata} setting by Ambainis and Freivalds in 1998
\cite{Ambainis-Freivalds:1998:QFA} (later improved in \cite{Ambainis-Nahimovs:2008:QFA}).
Subsequently, the same technique was adapted for the quantum branching programs by Ablayev,
Gainutdinova and Karpinski in 2001 \cite{Ablayev-Gainutdinova-Karpinski:2001:QBP}, and was later
generalized in \cite{ablayev-vasiliev:2008:ECCC}.

For our technique we use a variant of polynomial presentation for
Boolean functions, which we call \emph{characteristic}. The
polynomial presentations of Boolean functions are widely used in
theoretical computer science. For instance, an algebraic
transformation of Boolean functions has been applied in
\cite{Jain:1992:verification} and
\cite{Agrawal:1998:characteristic-polynomials} for verification of
Boolean functions. In the quantum setting polynomial representations
were used for proving lower bounds on communication complexity in
\cite{Buhrman:2001:Fingerprinting} as well as for investigating query
complexity in \cite{Wolf:2001:PhD}. Our approach combines the ideas
similar to the definition of characteristic polynomial from
\cite{Jain:1992:verification},
\cite{Agrawal:1998:characteristic-polynomials} and to the notion of
\emph{zero-error polynomial} (see, e.g. \cite{Wolf:2001:PhD}).

Finally, we use the technique of polynomial projections to outline
the limits of our fingerprinting method. This technique was
intensively applied by Sauerhoff in the model of classical branching
programs (see, e.g., \cite{Sauerhoff:2001:randomized-BPs}). In this
paper we apply this approach for the quantum OBDD model by showing
that the class of functions effectively computable via the quantum
fingerprinting technique is closed under polynomial projections.

\section{Preliminaries}

We use the notation $\ket{i}$ for the vector from $\Hilbert^d$,
which has a $1$ on the $i$-th position and $0$ elsewhere. An orthonormal
basis $\ket{1}$,\ldots,$\ket{d}$ is usually referred to as the
\emph{standard computational basis}. In this paper we consider all
quantum transformations and measurements with respect to this basis.

\begin{definition}
A Quantum Branching Program ${Q}$ over the Hilbert space
$\Hilbert^d$ is defined as
\[
Q=\langle T, \ket{\psi_0}, Accept\rangle,
\]
where $T$ is a sequence of $l$ instructions: $T_j=\left(x_{i_j},
U_j(0),U_j(1)\right)$ is determined by the variable $x_{i_j}$ tested
on the step $j$, and $U_j(0)$, $U_j(1)$ are unitary transformations
in $\Hilbert^d$.

Vectors $\ket{\psi}\in \Hilbert^d$ are called states (state vectors)
of $Q$, $\ket{\psi_0}\in \Hilbert^d$ is the initial state of $Q$,
and $Accept\subseteq\{1,2,\ldots d\}$ is the set of indices of accepting basis states.

We define a computation of ${Q}$ on an input $\sigma = \sigma_1
\ldots \sigma_n \in \{0,1\}^n$  as follows:
\begin{enumerate}
\item A computation of ${Q}$ starts from the initial state
      $\ket{\psi_0}$;
\item The $j$-th instruction of $Q$ reads the input symbol $\sigma_{i_j}$ (the
value of $x_{i_j}$) and applies the transition matrix $U_j =
      U_j(\sigma_{i_j})$ to the current state $\ket{\psi}$ to
      obtain the state
      $\ket{\psi'}=U_j(\sigma_{i_j})\ket{\psi}$;
\item The final state is
\[ \ket{\psi_\sigma}= \left(\prod_{j=l}^1 U_j(\sigma_{i_j})\right)
\ket{\psi_0}\enspace . \]
\item After the $l$-th (last) step of quantum transformation $Q$ measures
its configuration $\ket{\psi_\sigma}$ =\linebreak
$(\alpha_1,\ldots, \alpha_d)^T$, and the input $\sigma$ is
accepted with probability
\[Pr_{accept}(\sigma)=\sum\limits_{i\in
Accept}|\alpha_i|^2.\]
\end{enumerate}

\end{definition}

Note, that using the set $Accept$ we can construct $M_{accept}$ -- a
projector on the accepting subspace $\Hilbert^{d}_{accept}$ (i.e. a
diagonal zero-one projection matrix, which determines the final
projective measurement). Thus, the accepting probability can be
re-written as
\[ Pr_{accept}(\sigma)=\braket{\psi_\sigma M^\dag_{accept}}{M_{accept}\psi_\sigma}=
||M_{accept}\ket{\psi_\sigma}||^2_2. \]

Note also that this is a ``measure-once'' model analogous to the
model of quantum finite automata in~\cite{Moore:2000:QFA}, in which
the system evolves unitarily except for a single measurement at the
end.  We could also allow multiple measurements during the
computation, by representing the state as a density matrix $\rho$,
and by making the $U_j$ superoperators, but we do not consider this
here.

\paragraph{Circuit representation.}

Quantum algorithms are usually given by using quantum circuit
formalism \cite{Deutsch:1989:QuantumCircuits, Yao:1993:QCircuits},
because this approach is quite straightforward for describing such
algorithms.

We propose, that a QBP represents a classically-controlled quantum
system. That is, a QBP can be viewed as a quantum circuit aided with
an ability to read classical bits as control variables for unitary
operations. 
\[\quad\quad
\Qcircuit  @C=0.75em @R=1.0em {
&&&\lstick{x_{i_1}} & \control\cw\cwx[5] & \controlo\cw\cwx[5] & \cw & \cw & \cw & \push{\cdots\quad} & \cw & \cw & \cw & \cw\\
\\
&&&\lstick{x_{i_2}} & \cw & \cw & \control\cw\cwx[3] & \controlo\cw\cwx[3] & \cw & \push{\cdots\quad} & \cw & \cw & \cw & \cw\\
&&&\vdots \\
&&&\lstick{x_{i_l}} & \cw & \cw & \cw & \cw & \cw  & \push{\cdots\quad} & \control\cw\cwx[1] & \controlo\cw\cwx[1] & \cw & \cw\\
&&&\lstick{\ket{\phi_1}} & \multigate{3}{U_1(1)} & \multigate{3}{U_1(0)} & \multigate{3}{U_2(1)} & \multigate{3}{U_2(0)} & \qw  & \push{\cdots\quad} & \multigate{3}{U_l(1)} & \multigate{3}{U_l(0)} & \meter & \qw\\
&&&\lstick{\ket{\phi_2}} & \ghost{U_1(1)} & \ghost{U_1(0)} & \ghost{U_2(1)} & \ghost{U_2(0)} & \qw & \push{\cdots\quad} & \ghost{U_l(1)} & \ghost{U_l(0)} & \meter & \qw\\
\ustick{\ket{\psi_0}\quad\quad\quad~} \gategroup{6}{1}{9}{1}{1em}{\{} & \vdots \\
&&&\lstick{\ket{\phi_q}} & \ghost{U_1(1)} & \ghost{U_1(0)} & \ghost{U_2(1)} & \ghost{U_2(0)} & \qw & \push{\cdots\quad} & \ghost{U_l(1)} & \ghost{U_l(0)} & \meter & \qw\\
}
\]
Here $x_{i_1},\ldots,x_{i_l}$ is the sequence of (not necessarily
distinct) variables denoting classical control bits. Using the
common notation single wires carry quantum information and double
wires denote classical information and control.

\paragraph{Complexity measures.}
The width of a QBP $Q$, denoted by $width(Q)$, is the dimension $d$
 of the corresponding state space
$\Hilbert^d$, and the length of $Q$, denoted by $length(Q)$, is the number $l$ of
instructions in the sequence $T$. There is one more commonly used complexity measure --
the size of Q, which we define as $size(Q)=width(Q)\cdot length(Q)$.

Note that for a QBP $Q$ in the circuit setting another important
complexity measure explicitly comes out -- a number of quantum bits,
denoted by $qubits(Q)$,
physically needed to implement a corresponding quantum system with
classical control. From definition it follows that $\log width(Q)\leq qubits(Q)$.


\paragraph{Acceptance criteria.} A $\mathrm{QBP}$ $Q$ \emph{computes the Boolean function $f$ with
bounded error} if there exists an $\epsilon\in(0,1/2)$ (called
\emph{margin}) such that for all inputs the probability of error is
bounded by $1/2-\epsilon$.\\

In particular, we say that a $\mathrm{QBP}$ $Q$ \emph{computes the
Boolean function $f$ with one-sided error} if there exists an
$\epsilon\in(0,1)$ (called \emph{error}) such that for all
$\sigma\in f^{-1}(1)$ the probability of $Q$ accepting $\sigma$ is 1
and for all $\sigma\in f^{-1}(0)$ the probability of $Q$ erroneously
accepting $\sigma$ is less than $\epsilon$.

\paragraph{Read-once branching programs.}

Read-once BPs is a well-known restricted variant of branching
programs \cite{Wegener:2000:BP}.

\begin{definition}
We call a $\mathrm{QBP}$ $Q$ a quantum $\mathrm{OBDD}$
($\mathrm{QOBDD}$) or read-once $\mathrm{QBP}$ if each variable
$x\in\{x_1,\dots,x_n\}$ occurs in the sequence $T$ of
transformations of $Q$ at most once.
\end{definition}

For the rest of the paper we're only interested in QOBDDs, i.e. the
length of all programs would be $n$ (the number of input variables).
Note that for OBDD model $size(Q)=n\cdot width(Q)$ and therefore
we're mostly interested in the width of quantum OBDDs.

\paragraph{Generalized Lower Bound.}
The following general lower bound on the width of QOBDDs was proven
in \cite{Ablayev-et-al:2005:QBP}.

\begin{theorem}\label{General-Lower-Bound}
 Let $f(x_1, \ldots, x_n)$ be a Boolean function
 computed by a quantum read-once branching program $Q$ with bounded error for some
 margin $\epsilon$. Then
 \[ {\rm width}(Q)\geq\frac{\log {\rm width}(P)}{2\log\left(1+\frac{1}{\epsilon}\right)} \]
 where $P$  is a deterministic OBDD of minimal width computing
 $f(x_1, \ldots, x_n)$.
\end{theorem}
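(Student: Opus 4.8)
The plan is to extract from a minimal-width deterministic OBDD $P$ for $f$ (taken in the same variable ordering as $Q$, which only strengthens the claim) a large family of prefixes whose images under $Q$ must be pairwise well-separated unit vectors, and then to bound how many such vectors can live on the unit sphere of $\Hilbert^d$.

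First I would pick a layer at which $P$ attains its width $w=\mathrm{width}(P)$. Because $P$ is minimal, the $w$ nodes of that layer realize $w$ pairwise distinct subfunctions; hence there are assignments $u_1,\dots,u_w$ to the first (say) $k$ variables of the ordering with $f_{u_i}\ne f_{u_j}$ for $i\ne j$, where $f_u(v)=f(uv)$. Let $\ket{\psi_i}\in\Hilbert^d$ be the state of $Q$ after it has executed its first $k$ instructions on the bits $u_i$. The read-once property is used here in an essential way: the instructions of $Q$ after layer $k$ only test variables outside the prefix, so on input $u_iv$ the final state of $Q$ is $V_v\ket{\psi_i}$ for a unitary $V_v$ depending on $v$ alone, and the acceptance probability is $\bra{\psi_i}E_v\ket{\psi_i}$ with $E_v=V_v^{\dagger}M_{accept}V_v$ again a projector.

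Next I would convert distinctness of the subfunctions into geometric separation of the $\ket{\psi_i}$. Given $i\ne j$, choose a suffix $v$ with $f(u_iv)\ne f(u_jv)$, say $f(u_iv)=1$ and $f(u_jv)=0$. Since $Q$ computes $f$ with margin $\epsilon$, $\bra{\psi_i}E_v\ket{\psi_i}\ge\tfrac12+\epsilon$ and $\bra{\psi_j}E_v\ket{\psi_j}\le\tfrac12-\epsilon$, so $\bigl|\mathrm{Tr}\!\left(E_v(\rho_i-\rho_j)\right)\bigr|\ge 2\epsilon$ for $\rho_i=\ket{\psi_i}\bra{\psi_i}$. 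As $0\preceq E_v\preceq I$ and $\rho_i-\rho_j$ is traceless Hermitian, this gives $\tfrac12\|\rho_i-\rho_j\|_1\ge 2\epsilon$, and for pure states $\tfrac12\|\rho_i-\rho_j\|_1=\sqrt{1-|\braket{\psi_i}{\psi_j}|^2}$. Hence $|\braket{\psi_i}{\psi_j}|\le\sqrt{1-4\epsilon^2}$, and therefore $\mathrm{Re}\,\braket{\psi_i}{\psi_j}\le\sqrt{1-4\epsilon^2}$, so $\|\ket{\psi_i}-\ket{\psi_j}\|^2=2-2\,\mathrm{Re}\,\braket{\psi_i}{\psi_j}\ge 2-2\sqrt{1-4\epsilon^2}\ge 4\epsilon^2$. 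Thus the $w$ unit vectors $\ket{\psi_1},\dots,\ket{\psi_w}$ are pairwise at Euclidean distance at least $2\epsilon$ in $\Hilbert^d\cong\mathbb{R}^{2d}$.

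Finally I would finish with an elementary packing bound: the open balls of radius $\epsilon$ centred at the $\ket{\psi_i}$ are disjoint and contained in the ball of radius $1+\epsilon$ about the origin, so comparing volumes in $\mathbb{R}^{2d}$ gives $w\le\bigl((1+\epsilon)/\epsilon\bigr)^{2d}=(1+1/\epsilon)^{2d}$; taking logarithms yields $\mathrm{width}(Q)=d\ge \log w/\bigl(2\log(1+1/\epsilon)\bigr)$. I expect the only delicate point to be the middle step: one must (i) invoke read-once-ness to keep $E_v$ prefix-independent, and (ii) route the estimate through $|\braket{\psi_i}{\psi_j}|$ rather than $\mathrm{Re}\,\braket{\psi_i}{\psi_j}$, since the global-phase ambiguity of quantum states would otherwise let well-separated states be Euclidean-close. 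With these in place the rest is just the chain of inequalities above together with a standard sphere-packing count, and the constants fall out to give exactly $2\log(1+1/\epsilon)$.
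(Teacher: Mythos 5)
Your proof is correct. Note that the paper itself states Theorem \ref{General-Lower-Bound} without proof, citing \cite{Ablayev-et-al:2005:QBP}; your argument is essentially the standard one from that line of work: distinct subfunctions at the widest level of the (same-ordering) minimal deterministic OBDD force pairwise-distinguishable prefix states, the read-once property keeps the suffix measurement $E_v$ prefix-independent, and a volume/packing count in $\mathbb{R}^{2d}$ finishes. Your one refinement worth keeping is routing the separation through the trace distance $\tfrac12\|\rho_i-\rho_j\|_1=\sqrt{1-|\braket{\psi_i}{\psi_j}|^2}\ge 2\epsilon$, which yields Euclidean separation $2\epsilon$ and hence exactly the stated constant $2\log(1+1/\epsilon)$, whereas the cruder estimate $|p_\psi-p_\phi|\le 2\|\psi-\phi\|$ would only give separation $\epsilon$ and the slightly weaker denominator $2\log(1+2/\epsilon)$.
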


That is, the width of a quantum OBDD cannot be asymptotically less
than the logarithm of the width of the minimal deterministic OBDD
computing the same function. And since the deterministic width of
many \quotes{natural} functions is exponential
\cite{Wegener:2000:BP}, we obtain the linear lower bound for these
functions.

Let $bits(P)$  be  the number of bits (memory size) required to
implement the minimal deterministic OBDD  $P$ for $f$ and $Q$ is an
arbitrary quantum OBDD computing the same function.

Then  theorem \ref{General-Lower-Bound} can be restated as the
following corollary using the number of bits and qubits as the
complexity measure.
\begin{corollary}
$$qubits(Q)=\Omega(\log{bits(P)}).$$
\end{corollary}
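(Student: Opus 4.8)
The plan is to obtain the bound by chaining together three elementary facts and noticing that the only quantity that survives with a non-trivial growth rate is $\log\log{\rm width}(P)$. First I would recall how $P$ is implemented: after each variable test the deterministic OBDD occupies one of its at most ${\rm width}(P)$ states (the layer index is implicit in the read-once order), so a straightforward encoding of the current state uses $\lceil\log{\rm width}(P)\rceil$ bits, and in particular $bits(P)=\Theta(\log{\rm width}(P))$.

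Next I would combine this with the two inequalities already on the table. On one side, the circuit presentation of a QBP gives $qubits(Q)\geq\log{\rm width}(Q)$. On the other side, Theorem~\ref{General-Lower-Bound} gives ${\rm width}(Q)\geq\log{\rm width}(P)\big/\bigl(2\log(1+1/\epsilon)\bigr)$. Taking logarithms of the latter and substituting into the former yields
\[
qubits(Q)\;\geq\;\log{\rm width}(Q)\;\geq\;\log\log{\rm width}(P)-\log\bigl(2\log(1+1/\epsilon)\bigr).
\]
Since $Q$ computes $f$ with a fixed margin $\epsilon\in(0,1/2)$, the subtracted term is an absolute constant, and since $bits(P)=\Theta(\log{\rm width}(P))$ we have $\log bits(P)=\log\log{\rm width}(P)+O(1)$. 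Hence $qubits(Q)\geq\log bits(P)-O(1)$, which is exactly the claimed $qubits(Q)=\Omega(\log bits(P))$.

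The point to be careful about is purely one of interpretation rather than a genuine obstacle: the $\Omega$-statement is asymptotic in a family of Boolean functions and is meaningful only along sequences whose minimal deterministic OBDD width (equivalently $bits(P)$) is unbounded — for such families the additive constants coming from $\epsilon$ and from the ceiling in the definition of $bits(P)$ are negligible, while for a single function the claim is vacuous. A secondary issue is to keep the margin fixed; if one allowed $\epsilon=\epsilon(n)$, the factor $2\log(1+1/\epsilon)$ would migrate inside the outer logarithm and the bound would weaken to $qubits(Q)=\Omega\bigl(\log bits(P)-\log\log(1+1/\epsilon)\bigr)$. Otherwise the corollary is essentially Theorem~\ref{General-Lower-Bound} re-expressed in the bits/qubits language.
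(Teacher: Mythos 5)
Your proof is correct and is exactly the restatement the paper intends: the paper gives no explicit argument, merely asserting that Theorem~\ref{General-Lower-Bound} ``can be restated'' in the bits/qubits language, and your chain $qubits(Q)\geq\log{\rm width}(Q)\geq\log\log{\rm width}(P)-O(1)=\log bits(P)-O(1)$ together with $bits(P)=\Theta(\log{\rm width}(P))$ is precisely that restatement. Your caveats about fixed margin $\epsilon$ and the asymptotic reading of the $\Omega$-notation are sensible but do not change the argument.
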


\section{Algorithms for QBPs Based on Fingerprinting}

Generally \cite{Motwani:1995:Randomized-algorithms},
\emph{fingerprinting} -- is a technique that allows to present
objects (words over some finite alphabet) by their
\emph{fingerprints}, which are significantly smaller than the
originals. It is used in randomized and quantum algorithms to test
\emph{equality} of some objects (binary strings) with one-sided
error by simply comparing their fingerprints.

In this paper we develop a variant of the fingerprinting technique
adapted for quantum branching programs. At the heart of the method
is the representation of Boolean functions by polynomials of special
type, which we call \emph{characteristic}.

\subsection{Characteristic Polynomials for Quantum Fingerprinting}

The following definition is similar to the algebraic transformation
of Boolean function from \cite{Jain:1992:verification} and
\cite{Agrawal:1998:characteristic-polynomials}, though it is adapted
for the fingerprinting technique.

\begin{definition}\label{polynomial-definition}
We call a polynomial $g(x_1,\dots,x_n)$ over the ring ${\mathbb
Z}_m$ a characteristic polynomial of a Boolean function
$f(x_1,\ldots,x_n)$ and denote it $g_f$
%
when for all $\sigma\in\{0,1\}^n$ 
$g_f(\sigma)=0$ iff $f(\sigma)=1$.
\end{definition}

%
\begin{lemma}\label{Existence-Of-A-Characteristic-Polynomial}
For any Boolean function $f$ there exists a characteristic
polynomial $g_f$ over ${\mathbb Z}_{2^n}$.
\end{lemma}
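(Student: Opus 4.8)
The plan is to write down an explicit characteristic polynomial built from the point‑indicator monomials of the zero set of $f$. For $\tau=(\tau_1,\dots,\tau_n)\in\{0,1\}^n$ put
\[
\delta_\tau(x_1,\dots,x_n)\;=\;\prod_{i:\,\tau_i=1}x_i\;\cdot\;\prod_{i:\,\tau_i=0}(1-x_i),
\]
which is a multilinear polynomial with integer (hence $\mathbb{Z}_{2^n}$) coefficients. The first step is the routine observation that for every $\sigma\in\{0,1\}^n$ one has $\delta_\tau(\sigma)=1$ if $\sigma=\tau$ and $\delta_\tau(\sigma)=0$ otherwise; this follows by evaluating each of the $n$ factors separately and does not depend on the ring.

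Next I would set
\[
g_f(x_1,\dots,x_n)\;=\;\sum_{\tau\in f^{-1}(0)}\delta_\tau(x_1,\dots,x_n),
\]
viewed as a polynomial over $\mathbb{Z}_{2^n}$, with the convention that an empty sum is the zero polynomial (this covers the case $f\equiv 1$). By the first step, for any $\sigma\in\{0,1\}^n$ the value $g_f(\sigma)$ just counts how many $\tau\in f^{-1}(0)$ equal $\sigma$, so $g_f(\sigma)=1$ when $f(\sigma)=0$ and $g_f(\sigma)=0$ when $f(\sigma)=1$.

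Finally I would close the argument by pointing out the only place the modulus enters: the nonzero value produced above is exactly $1$, and $1\not\equiv 0\pmod{2^n}$, so it remains nonzero in $\mathbb{Z}_{2^n}$; hence $g_f(\sigma)=0$ in $\mathbb{Z}_{2^n}$ precisely when $f(\sigma)=1$, which is exactly the property required by Definition~\ref{polynomial-definition}. There is no substantial obstacle here — the same construction in fact works over any $\mathbb{Z}_m$ with $m\ge 2$ (equivalently, one may take $g_f=1-\tilde f$ for the multilinear extension $\tilde f$ of $f$) — so the only thing one has to be careful about is checking that adding up to $2^n$ indicator terms cannot collapse a would‑be nonzero value to $0$ modulo $2^n$, and it cannot, since at each Boolean point at most one term is nonzero. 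Obtaining a \emph{low‑degree} (in particular linear) characteristic polynomial would be genuinely restrictive, but that is the subject of the later sections; the present lemma asserts only existence.
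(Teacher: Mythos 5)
Your construction is essentially the paper's: the paper arithmetizes a sum-of-products representation of $\neg f$ (replacing $\neg x_j$ by $1-x_j$ and summing the resulting products), and your $\sum_{\tau\in f^{-1}(0)}\delta_\tau$ is exactly that construction applied to the canonical minterm expansion of $\neg f$. Your version is, if anything, slightly tighter: by choosing the minterm expansion you guarantee that at most one summand is nonzero at each point, so the value is $0$ or $1$ and cannot wrap around modulo $2^n$ --- a point the paper's argument for an arbitrary sum of products passes over silently.
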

\begin{proof}
One way to construct such characteristic polynomial $g_f$ is
transforming a sum of products representation for $\neg f$.

Let $K_1\vee\ldots\vee K_l$ be a sum of products for $\neg f$ and
let $\tilde{K_i}$ be a product of terms from $K_i$ (negations $\neg
x_j$ are replaced by $1-x_j$). Then $\tilde{K_1}+\ldots+\tilde{K_l}$
is a characteristic polynomial over ${\mathbb Z}_{2^n}$ for $f$
since it equals $0$ $\iff$ all of $\tilde{K_i}$ (and thus $K_i$)
equal $0$. This happens only when the negation of $f$ equals $0$.
\end{proof}

Generally, there are many polynomials for the same function, but
we're interested in having a linear polynomial if it exists.

For example, the function $EQ_n$, which tests the equality of two
$n$-bit binary strings, has the following polynomial over ${\mathbb
Z}_{2^n}$:
$$\sum\limits_{i=1}^n\left(x_i(1-y_i)+(1-x_i)y_i\right)=\sum\limits_{i=1}^n\left(x_i+y_i-2x_iy_i\right).$$

On the other hand, the same function can be represented by the
polynomial
$$\sum\limits_{i=1}^nx_i2^{i-1}-\sum\limits_{i=1}^ny_i2^{i-1}.$$

Some functions don't have a linear characteristic polynomial at all
(e.g. a disjunction of $n$ variables $f=x_1\vee x_2\vee \ldots \vee
x_n$), while their negations have a linear characteristic polynomial
(e.g. $g_{\neg f}= \sum_{i=1}^n x_i$ over ${\mathbb Z}_{n+1}$).

We use this presentation of Boolean functions for our fingerprinting
technique.

\subsection{Fingerprinting technique}\label{fqbp}

For a Boolean function $f$ we choose an error rate $\epsilon>0$ and
pick a characteristic polynomial $g$ over the ring $\mathbb{Z}_m$.
Then for arbitrary binary string $\sigma=\sigma_1\ldots \sigma_n$ we
create its fingerprint $\ket{h_\sigma}$ composing
$t=\GoodSetSize{m}$ single qubit fingerprints $\ket{h^i_\sigma}$:
$$\ket{h^i_\sigma}  = \cos\frac{2\pi k_i
g(\sigma)}{m}\ket{0}+\sin\frac{2\pi k_i g(\sigma)}{m}\ket{1}$$
into entangled state $\ket{h_\sigma}$ of $\log{t}+1$ qubits:
$$\ket{h_\sigma} =
\frac{1}{\sqrt{t}}\sum\limits_{i=1}^t\ket{i}\ket{h^i_\sigma}.
$$
Here the transformations of the last qubit in $t$ different subspaces \quotes{simulate}
the transformations of all of the $\ket{h^i_\sigma}$ ($i=1,\ldots,t$).
%
%
That is, the last qubit is in parallel rotated by $t$ different angles about the
$\hat{y}$ axis of the Bloch sphere.

The chosen parameters $k_i\in\{1,\ldots,m-1\}$ for
$i\in\{1,\ldots,t\}$ are ``good'' following the notion of
\cite{Ambainis-Freivalds:1998:QFA}.

\begin{definition}\label{good-set}
A set of parameters $K=\{k_1, \dots, k_t\}$ is called ``good'' for
some integer $b\neq0 \bmod m$ if
$$\frac{1}{t^2}\left(\sum\limits_{i=1}^t\cos{\frac{2\pi k_i b}{m}}\right)^2<\epsilon.$$
\end{definition}
The left side of inequality is the squared amplitude of the basis
state $\ket{0}^{\otimes\log{t}}\ket{0}$ if $b=g(\sigma)$ and the
operator $H^{\otimes\log{t}}\otimes I$ has been applied to the
fingerprint $\ket{h_\sigma}$. Informally, that kind of set
guarantees, that the probability of error will be bounded by a
constant below 1.

The following lemma proves the existence of a ``good'' set and
generalizes the proof of the corresponding statement from
\cite{Ambainis-Nahimovs:2008:QFA}.

\begin{lemma}\cite{ablayev-vasiliev:2008:ECCC}\label{existence-of-a-good-set}
There is a set $K$ with $|K|=t=\GoodSetSize{m}$ which is ``good''
for all integer $b\neq0 \bmod m$.
\end{lemma}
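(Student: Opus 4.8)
The plan is to argue by the probabilistic method. First I would pick the parameters $k_1,\dots,k_t$ independently and uniformly at random (say from $\mathbb{Z}_m$, so that the mean computation below is cleanest), fix an arbitrary residue $b\not\equiv 0 \pmod m$, and bound the probability that the random set $K=\{k_1,\dots,k_t\}$ fails to be ``good'' for that particular $b$. Writing $S_b=\sum_{i=1}^t\cos\frac{2\pi k_i b}{m}$, the set is ``good'' for $b$ precisely when $S_b^2<\epsilon t^2$, i.e. $|S_b|<\sqrt{\epsilon}\,t$. The terms $\cos\frac{2\pi k_i b}{m}$ are independent, lie in $[-1,1]$, and have expectation $\frac1m\sum_{k=0}^{m-1}\cos\frac{2\pi k b}{m}=0$ (the full sum of $m$-th roots of unity vanishes for $b\not\equiv0$), so $\mathbb{E}[S_b]=0$.

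The key step is an \emph{exponential} tail bound on $S_b$. By Hoeffding's inequality for sums of independent variables taking values in $[-1,1]$,
\[
\Pr\bigl[\,|S_b|\ge \lambda\,\bigr]\le 2\exp\!\left(-\frac{\lambda^2}{2t}\right),
\]
and taking $\lambda=\sqrt{\epsilon}\,t$ gives $\Pr[\,|S_b|\ge\sqrt{\epsilon}\,t\,]\le 2\exp(-\epsilon t/2)$. Now the choice $t=\GoodSetSize{m}$ is calibrated so that $t\ge(2/\epsilon)\ln(2m)$, whence $2\exp(-\epsilon t/2)\le 2\exp(-\ln(2m))=1/m$. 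Taking a union bound over all $m-1$ nonzero residues $b$, the probability that $K$ fails to be ``good'' for at least one $b$ is at most $(m-1)/m<1$; therefore with positive probability $K$ is simultaneously ``good'' for every $b\not\equiv0\pmod m$, and in particular such a $K$ of size $t$ exists.

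I expect the main obstacle to be getting the concentration at the right strength: a second-moment (Chebyshev) estimate only yields a failure probability of order $1/(\epsilon t)$ for a fixed $b$, which cannot survive a union bound over $\Theta(m)$ residues when $t$ is only logarithmic in $m$, so one genuinely needs the Chernoff/Hoeffding exponential bound. Since the summands are arbitrary reals in $[-1,1]$ rather than $\pm1$ Bernoulli variables, this means invoking Hoeffding's lemma (or estimating the moment generating function $\mathbb{E}[e^{\theta\cos(2\pi k b/m)}]$ directly). The remaining points are routine: if one prefers to draw $k_i$ from $\{1,\dots,m-1\}$ as the construction requires, the mean of each summand becomes $-\frac1{m-1}$ instead of $0$, contributing a negligible shift of size $\frac{t}{m-1}=o(\sqrt{\epsilon}\,t)$ that is absorbed into the bound; and one checks that rounding $(2/\epsilon)\ln(2m)$ up to the next power of two in $\GoodSetSize{m}$ only helps, since it preserves $t\ge(2/\epsilon)\ln(2m)$.
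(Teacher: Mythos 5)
Your argument is correct: the Hoeffding bound $2\exp(-\epsilon t/2)\le 1/m$ for each fixed $b$, followed by a union bound over the $m-1$ nonzero residues, is exactly the standard probabilistic-method proof, and you rightly identify that an exponential (not Chebyshev) tail is needed for the union bound to close with $t$ only logarithmic in $m$. The paper itself gives no proof here --- the lemma is imported from the cited reference, whose argument (generalizing Ambainis--Nahimovs) is essentially the one you give --- so your proposal matches the intended approach.
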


\comment{

The proof of the previous lemma doesn't give an algorithm for the
construction of a ``good'' set. But in
\cite{Ambainis-Nahimovs:2008:QFA} there is an explicit approach
based on ideas from \cite{aikps90}, which can be used to construct a
bigger, but still ``good'' set.

Fix $\epsilon>0$ and introduce the following notation:
$$P = \{p|\ p\mbox{ is prime and } (\log{m})^{1+\epsilon}/2<p\leq
(\log{m})^{1+\epsilon}\},$$
$$S = \{1, 2, \ldots, (\log{m})^{1+2\epsilon}\},$$
$$K = \{s\cdot p^{-1}|\ s\in S, p\in P\},$$
where $p^{-1}$ is the inverse modulo $m$.

It is obvious that $|K| = O(\log^{2+3\epsilon}m)$. It turns out
\cite{aikps90} that for each $g\in\{1, 2$, \ldots, $m-1\}$
$$\frac{1}{|K|}\left|\sum\limits_{j=1}^t e^{\frac{2\pi k_j g}{m}i}\right|\leq(\log{m})^{-\epsilon}.$$

Taking the real part of the previous inequality we induce that
$$\frac{1}{t}\left|\sum\limits_{j=1}^t \cos{\frac{2\pi k_j g}{m}}\right|\leq(\log{m})^{-\epsilon}$$
for each $g\in\{1, 2, \ldots, m-1\}$.

Thus, the set $K$ is ``good'' for all $g\neq 0 \bmod{m}$.

}

We use this result for our fingerprinting technique choosing  the
set $K=\{k_{1}, \dots, k_{t}\}$ which is ``good'' for all
$b=g(\sigma)\neq0$. That is, it allows to distinguish those inputs
whose image is 0 modulo $m$ from the others.


\subsection{Boolean Functions Computable via Fingerprinting Method}
\comment{ It is obvious that the complexity of computing some
Boolean function $f$ via the fingerprinting technique propagates to
the creating of its fingerprint. Now the question is what types of
the mapping $g$ allow effective computation of the corresponding
fingerprint.

Generally, there are many polynomials for the same function. But not
all of them are suitable for our method. }

Let $f(x_1,\ldots, x_n)$ be a Boolean function and $g(x_1,\ldots, x_n)$ be its
characteristic polynomial. The following theorem holds.

\begin{theorem}\label{LinearPolynomialComputation}
Let $\epsilon\in(0,1)$. If $g$ is a linear polynomial over
$\mathbb{Z}_m$, i.e. $g=c_1x_1+\ldots c_nx_n+c_0$, then $f$ can be
computed with one-sided error $\epsilon$ by a quantum OBDD of width
$O\left(\frac{\log m}{\epsilon}\right)$.
\end{theorem}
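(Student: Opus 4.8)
The plan is to build the quantum OBDD directly from the fingerprinting construction of Section~\ref{fqbp}, exploiting the fact that a \emph{linear} characteristic polynomial can be evaluated incrementally as the variables are read one at a time. First I would fix $t=\GoodSetSize{m}$ and, by Lemma~\ref{existence-of-a-good-set}, a set $K=\{k_1,\dots,k_t\}$ that is ``good'' for all $b\neq 0\bmod m$. The QOBDD will work on $\log t + 1$ qubits: an \emph{index register} of $\log t$ qubits and a single \emph{fingerprint qubit}. The initial state is $\ket{\psi_0}=\bigl(H^{\otimes\log t}\otimes I\bigr)\ket{0}^{\otimes\log t}\ket{0}=\frac{1}{\sqrt t}\sum_{i=1}^t\ket{i}\ket{0}$; note that creating this requires no input bits, so it can be folded into the first instruction. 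The width is $2^{\log t+1}=2t=O(\log m/\epsilon)$ by the choice of $t$, which already yields the claimed width bound.

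Next I would describe the instructions. Reading $x_{i_j}$ with value $\sigma_{i_j}$, the $j$-th instruction applies a transformation that, controlled on the index register holding $\ket{i}$, rotates the fingerprint qubit about the $\hat y$-axis of the Bloch sphere by the angle $\frac{2\pi k_i c_j \sigma_{i_j}}{m}$. Since the polynomial is linear, $g(\sigma)=c_0+\sum_j c_j\sigma_{i_j}$, so after all $n$ instructions the accumulated rotation in the $\ket{i}$-subspace is $\frac{2\pi k_i g(\sigma)}{m}$ provided the constant term $c_0$ is handled (e.g. as an extra fixed rotation absorbed into the first instruction or the initial state). Thus the state immediately after the last instruction is exactly the fingerprint $\ket{h_\sigma}=\frac{1}{\sqrt t}\sum_{i=1}^t\ket{i}\ket{h^i_\sigma}$ from Section~\ref{fqbp}. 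I would then apply a final $H^{\otimes\log t}\otimes I$ (again absorbed into the last instruction so that the program is a genuine sequence of unitaries with classical control) and measure, with $Accept=\{\,\ket{0}^{\otimes\log t}\ket{0}\,\}$.

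The correctness analysis is the standard fingerprinting argument. If $f(\sigma)=1$ then $g(\sigma)=0\bmod m$, every $\ket{h^i_\sigma}=\ket{0}$, the state before measurement is $\ket{0}^{\otimes\log t}\ket{0}$, and the input is accepted with probability $1$. If $f(\sigma)=0$ then $b:=g(\sigma)\neq 0\bmod m$, and the squared amplitude of the accepting basis state is precisely $\frac{1}{t^2}\bigl(\sum_{i=1}^t\cos\frac{2\pi k_i b}{m}\bigr)^2<\epsilon$ by the defining property of a ``good'' set (Definition~\ref{good-set}). Hence the program computes $f$ with one-sided error $\epsilon$, and since each variable $x_i$ appears in exactly one instruction (the coefficient $c_i$ is used once), it is a legitimate QOBDD. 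Finally I would observe $\log t+1=\lceil\log((2/\epsilon)\ln 2m)\rceil+1$, so $width(Q)=2^{\log t+1}=O(\log m/\epsilon)$.

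I expect the only real subtlety to be bookkeeping rather than mathematics: making sure the controlled $\hat y$-rotations by the $t$ different angles $\frac{2\pi k_i c_j}{m}$ genuinely constitute single unitary operators $U_j(0),U_j(1)$ of the required form on $\log t+1$ qubits, and checking that the ancillary operations ($H^{\otimes\log t}$ at both ends and the constant-term rotation $c_0$) can be absorbed into the first and last instructions without violating the read-once condition. Everything else follows from the results already established in the excerpt.
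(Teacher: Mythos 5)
Your proposal is correct and follows essentially the same route as the paper's own proof: a $\log t+1$-qubit program with Hadamards on the index register, index-controlled $\hat y$-rotations by angles proportional to $k_i c_j$ accumulating $g(\sigma)$ linearly, a final rotation for $c_0$, a closing Hadamard layer, and acceptance on the all-zero outcome, with the error bound coming directly from the ``good'' set property. The only difference is a factor-of-two bookkeeping convention in the rotation angle (the paper writes $R_{\hat y}(4\pi k_i c_j/m)$ so that the resulting amplitudes are $\cos(2\pi k_i c_j/m)$ and $\sin(2\pi k_i c_j/m)$), which does not affect correctness.
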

\begin{proof}
Here is the algorithm in the circuit notation:
\[\quad\quad\quad
\Qcircuit @C=0.75em @R=1.0em {
\lstick{x_1} & \cw & \control\cw\cwx[3] & \cw & ~_{\cdots}\quad & \cw & \cw & ~_{\cdots}\quad & \cw & \cw & ~_{\cdots}\quad & \cw & \cw & \cw & \cw\\
\vdots \\
\lstick{x_n} & \cw & \cw & \cw & ~_{\cdots}\quad & \control\cw\cwx[1] & \cw & ~_{\cdots}\quad & \cw & \cw & ~_{\cdots}\quad & \cw & \cw & \cw & \cw\\
\lstick{\ket{\phi_1}} & \gate{H} & \ctrlo{1} & \qw  & ~_{\cdots}\quad & \ctrl{1} & \qw & ~_{\cdots}\quad & \ctrlo{1} & \qw  & ~_{\cdots}\quad & \ctrl{1} & \gate{H} & \meter & \qw\\
\lstick{\ket{\phi_2}} & \gate{H} & \ctrlo{2} & \qw  & ~_{\cdots}\quad & \ctrl{2} &  \qw & ~_{\cdots}\quad & \ctrlo{2} & \qw  & ~_{\cdots}\quad & \ctrl{2} & \gate{H} & \meter & \qw\\
\vdots &&\ustick{\quad\quad\quad^{\ket{1}}} &&& \ustick{\quad\quad~~^{\ket{t}}}&&& \ustick{\quad\quad\quad^{\ket{1}}}&&& \ustick{\quad\quad~~^{\ket{t}}}\gategroup{4}{3}{7}{3}{1em}{\}}\gategroup{4}{6}{7}{6}{1em}{\}}\gategroup{4}{9}{7}{9}{1em}{\}}\gategroup{4}{12}{7}{12}{1em}{\}} \\
\lstick{\ket{\phi_{\, \log t}}} & \gate{H} & \ctrlo{1} & \qw & ~_{\cdots}\quad & \ctrl{1}& \qw & ~_{\cdots}\quad & \ctrlo{1} & \qw  & ~_{\cdots}\quad & \ctrl{1} & \gate{H} & \meter & \qw\\
\lstick{\ket{\phi_{t\!a\!r\!g\!e\!t}}}& \qw & \gate{R_{1,1}} & \qw & ~_{\cdots}\quad & \gate{R_{t,n}} & \qw & ~_{\cdots}\quad & \gate{R_{1,0}} & \qw & ~_{\cdots}\quad & \gate{R_{t,0}} & \qw & \meter & \qw\\
\uparrow & \quad\quad\uparrow &\quad\quad\quad\uparrow&&& \quad\quad\quad\uparrow &&&&& & \quad\quad\quad\uparrow & \quad\quad~\uparrow\\
~_{\ket{\psi_0}} & \quad\quad~_{\ket{\psi_1}} &\quad\quad\quad~_{\ket{\psi_2}}&&& \quad\quad\quad~_{\ket{\psi_3}} &&&&& & \quad\quad\quad~_{\ket{\psi_4}} & \quad\quad~~_{\ket{\psi_5}}\\
}
\]

Initially qubits $\ket{\phi_1}\otimes\ket{\phi_2} \otimes\dots \otimes\ket{\phi_{\log
t}} \otimes \ket{\phi_{t\!a\!r\!g\!e\!t}}$ are in the state
$\ket{\psi_0}=\ket{0}^{\otimes\log{t}}\ket{0}$. For $i\in\{1,\dots, t\}$,
$j\in\{0,\dots, n\}$ we define rotations $R_{i,j}$ as
$$R_{i,j}=R_{\hat{y}}\left(\frac{4\pi k_i c_j}{m}\right),$$
where $c_j$ are the coefficients of the linear polynomial for $f$ and the set of
parameters $K=\{k_{1}, \dots, k_{t}\}$ is ``good'' according to the Definition
\ref{good-set} with $t=\GoodSetSize{\cdot m}$.

Let $\sigma=\sigma_1\ldots\sigma_{n} \in \{0,1\}^{n}$ be an input string.

The first layer of Hadamard operators transforms the state $\ket{\psi_0}$ into
$$\ket{\psi_1} = \frac{1}{\sqrt{t}}\sum\limits_{i=1}^t\ket{i}\ket{0}.$$

Next, upon input symbol $0$ identity transformation $I$ is applied. But if the value of
$x_j$ is $1$, then the state of the last qubit is transformed by the operator $R_{i,j}$,
rotating it by the angle proportional to $c_j$. Moreover, the rotation is done in each
of $t$ subspaces with the corresponding amplitude $1/\sqrt{t}$. Such a parallelism is
implemented by the controlled operators $C_i(R_{i,j})$, which transform the states
$\ket{i}\ket{\cdot}$ into $\ket{i}R_{i,j}\ket{\cdot}$, and leave others unchanged. For
instance, having read the input symbol $x_1=1$, the system would evolve into state
$$\begin{array}{rcl}
\ket{\psi_2} & = & \frac{1}{\sqrt{t}}\sum\limits_{i=1}^tC_i(R_{i,1})\ket{i}\ket{0} =
\frac{1}{\sqrt{t}}\sum\limits_{i=1}^t\ket{i}R_{i,1}\ket{0}\\
& = & \frac{1}{\sqrt{t}}\sum\limits_{i=1}^t\ket{i}\left(\cos\frac{2\pi k_i
c_1}{m}\ket{0} + \sin\frac{2\pi k_i c_1}{m}\ket{1}\right)
\end{array}.$$

Thus, after having read the input $\sigma$ the amplitudes would ``collect'' the sum
$\sum_{j=1}^nc_j\sigma_j$
$$
\begin{array}{rcl}
\ket{\psi_3} & = & \frac{1}{\sqrt{t}}\sum\limits_{i=1}^t\ket{i}\left(\cos\frac{2\pi k_i
\sum_{j=1}^nc_j\sigma_j}{m}\ket{0} + \sin\frac{2\pi k_i
\sum_{j=1}^nc_j\sigma_j}{m}\ket{1}\right)
\end{array}.
$$

At the next step we perform the rotations by the angle $\frac{4\pi k_i c_0}{m}$ about
the $\hat{y}$ axis of the Bloch sphere for each $i\in\{1,\dots,t\}$. Therefore, the
state of the system would be
$$\begin{array}{rcl} \ket{\psi_4} & = &
\frac{1}{\sqrt{t}}\sum\limits_{i=1}^t\ket{i}\left(\cos\frac{2\pi k_i
g(\sigma)}{m}\ket{0}+\sin\frac{2\pi k_i g(\sigma)}{m}\ket{1}\right).
\end{array}$$

Applying $H^{\otimes\log{t}}\otimes I$ we obtain the state
$$\begin{array}{rcl}
\ket{\psi_5} & = & \left(\frac{1}{t}\sum\limits_{i=1}^t\cos\frac{2\pi k_i
g(\sigma)}{m}\right)\ket{0}^{\otimes\log{t}}\ket{0}+\\
&& + \gamma\ket{0}^{\otimes\log{t}}\ket{1} +
\sum\limits_{i=2}^{t}\ket{i}\left(\alpha_i\ket{0} + \beta_i\ket{1}\right),
\end{array}$$
where $\gamma$, $\alpha_i$, and $\beta_i$ are some unimportant amplitudes.

The input $\sigma$ is accepted if the measurement outcome is
$\ket{0}^{\otimes\log{t}}\ket{0}$. Clearly, the accepting probability is
\[ Pr_{accept}(\sigma) = \frac{1}{t^2}\left(\sum\limits_{i=1}^t\cos\frac{
  2\pi k_i g(\sigma)}{2^{n}}\right)^2.
\]

If $f(\sigma)=1$ then $g(\sigma)=0$ and the program accepts $\sigma$ with probability
$1$. Otherwise, the choice of the set $K=\{k_1,\dots,k_t\}$ guarantees that
$$Pr_{accept}(\sigma) = \frac{1}{t^2}\left(\sum\limits_{i=1}^t\cos\frac{
  2\pi k_i g(\sigma)}{2^{n}}\right)^2<\epsilon.$$

Thus, $f$ can be computed by a quantum OBDD $Q$, with $qubits(Q)=\log{t}+1=O(\log\left(\frac{\log m}{\varepsilon}\right))$.
The width of the program is $2^{qubits(Q)}=O\left(\frac{\log m}{\varepsilon}\right)$.
\end{proof}

The following functions (for definitions see, e.g.,
\cite{ablayev-vasiliev:2009:EPTCS}) have the aforementioned linear
polynomials and thus are effectively computed via the fingerprinting
technique.

$$\begin{array}{llll}
  MOD_m & \sum\limits_{i=1}^nx_i & \mathbb{Z}_{m} & O(\log\log{m})\\
  \hline
  MOD'_m & \sum\limits_{i=1}^nx_i2^{i-1} & \mathbb{Z}_{m} & O(\log\log{m}) \\
  \hline
  EQ_n & \sum\limits_{i=1}^nx_i2^{i-1}- \sum\limits_{i=1}^ny_i2^{i-1} & \mathbb{Z}_{2^n} & O(\log{n}) \\
  \hline
  Palindrome_n & \sum\limits_{i=1}^{\lfloor n/2\rfloor}x_i2^{i-1}- \sum\limits_{i=\lceil n/2\rceil}^nx_i2^{n-i} & \mathbb{Z}_{2^{\lfloor n/2\rfloor}} & O(\log{n}) \\
  \hline
  PERM_n &
  \sum\limits_{i=1}^n\sum\limits_{j=1}^nx_{ij}\left((n+1)^{i-1}+(n+1)^{n+j-1}\right) & \mathbb{Z}_{(n+1)^{2n}} & O(\log{n})\\
          &  - \sum\limits_{i=1}^{2n}(n+1)^{i-1} &&\\
\end{array}$$

%

\section{Generalized Approach}

The fingerprinting technique described in the previous section
allows us to test a single property of the input encoded by a
characteristic polynomial. Using the same ideas we can test the
conjunction of several conditions encoded by a group of
characteristic polynomials which we call a \emph{characteristic} of
a function.

\begin{definition}
We call a set $\chi_f$ of polynomials over $\mathbb{Z}_m$ a
\emph{characteristic} of a Boolean function $f$ if for all
polynomials $g\in\chi_f$ and all $\sigma\in\{0,1\}^n$ it holds
that $g(\sigma)=0$ iff $\sigma\in f^{-1}(1)$.
\end{definition}

We say that a characteristic is \emph{linear} if all of its
polynomials are linear.

From Lemma \ref{Existence-Of-A-Characteristic-Polynomial} it follows
that for each Boolean function there is always a characteristic
consisting of a single characteristic polynomial.

Now we can generalize the Fingerprinting technique from section
\ref{fqbp}.

\paragraph{Generalized Fingerprinting technique} For a Boolean function $f$ we
choose an error rate $\epsilon>0$ and pick a characteristic
$\chi_f=\{g_1,\ldots, g_l\}$ over $\mathbb{Z}_m$. Then for arbitrary binary string
$\sigma=\sigma_1\ldots \sigma_n$ we create its fingerprint
$\ket{h_\sigma}$ composing $t\cdot l$ ($t=\GoodSetSize{m}$) single
qubit fingerprints $\ket{h^i_\sigma(j)}$:
$$\begin{array}{rcl}
\ket{h^i_\sigma(j)} & = & \cos\frac{\pi k_i
g_j(\sigma)}{m}\ket{0}+\sin\frac{\pi k_i g_j(\sigma)}{m}\ket{1}\\
\ket{h_\sigma} & = &
\frac{1}{\sqrt{t}}\sum\limits_{i=1}^t\ket{i}\ket{h^i_\sigma(1)}\ket{h^i_\sigma(2)}\ldots\ket{h^i_\sigma(l)}
\end{array}$$

\begin{theorem}\label{LinearCharacteristicComputation}
If $\chi_f$ is a linear characteristic then $f$ can be computed by
a quantum OBDD of width $O(2^{|\chi_f|}\log m)$.
\end{theorem}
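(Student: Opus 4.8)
The plan is to generalize the circuit from Theorem~\ref{LinearPolynomialComputation} so that instead of a single ``target'' qubit accumulating one linear form, we use $l=|\chi_f|$ target qubits, one per polynomial $g_j\in\chi_f$, together with a shared index register of $\log t$ qubits. First I would set up the state space: the index register holds $\ket{i}$ for $i\in\{1,\dots,t\}$ with $t=\GoodSetSize{m}$ and a ``good'' set $K=\{k_1,\dots,k_t\}$, while each target qubit $j$ is initialized to $\ket{0}$. As in the previous proof, a layer of Hadamards on the index register creates the uniform superposition $\frac{1}{\sqrt t}\sum_i\ket{i}$, and reading each input bit $x_r=1$ triggers, in each branch $\ket{i}$, simultaneous controlled $\hat{y}$-rotations on all $l$ target qubits by angles proportional to $k_i$ times the coefficient of $x_r$ in $g_j$ (for qubit $j$); a final layer applies the rotations corresponding to the constant terms. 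After reading $\sigma$, target qubit $j$ in branch $i$ is in the single-qubit fingerprint state $\cos\frac{\pi k_i g_j(\sigma)}{m}\ket{0}+\sin\frac{\pi k_i g_j(\sigma)}{m}\ket{1}$, exactly matching the $\ket{h^i_\sigma(j)}$ of the generalized fingerprinting construction, so the global state is the fingerprint $\ket{h_\sigma}$.

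Next I would describe the measurement and error analysis. Apply $H^{\otimes\log t}$ to the index register and measure everything; accept iff the outcome is the all-zero string $\ket{0}^{\otimes\log t}\ket{0}^{\otimes l}$. The amplitude of this basis state is $\frac{1}{t}\sum_{i=1}^t\prod_{j=1}^l\cos\frac{\pi k_i g_j(\sigma)}{m}$, so the accepting probability is the square of this quantity. If $f(\sigma)=1$ then every $g_j(\sigma)=0$, every cosine factor is $1$, and the probability is exactly $1$. If $f(\sigma)=0$ then at least one $g_{j_0}(\sigma)\neq 0\bmod m$; I would bound $\bigl|\prod_j\cos\frac{\pi k_i g_j(\sigma)}{m}\bigr|\le\bigl|\cos\frac{\pi k_i g_{j_0}(\sigma)}{m}\bigr|$ and then, after noting that $\cos\frac{\pi k_i b}{m}=\cos\frac{2\pi k_i b}{2m}$ so a ``good'' set over $\mathbb Z_{2m}$ (or a rescaled use of Lemma~\ref{existence-of-a-good-set}) applies, invoke the ``good set'' inequality to conclude $Pr_{accept}(\sigma)<\epsilon$. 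This gives one-sided error $\epsilon$. Counting qubits: $\log t+l=O\bigl(\log(\tfrac{\log m}{\epsilon})\bigr)+|\chi_f|$, hence $width(Q)=2^{\log t+l}=O(2^{|\chi_f|}\log m)$, absorbing the $1/\epsilon$ factor into the constant as in the statement.

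The main obstacle, and the step I would be most careful about, is the bound $\bigl|\prod_{j=1}^l\cos\frac{\pi k_i g_j(\sigma)}{m}\bigr|\le\bigl|\cos\frac{\pi k_i g_{j_0}(\sigma)}{m}\bigr|$: this is fine termwise in $i$ since each cosine has absolute value at most $1$, but one must make sure that after summing over $i$ and squaring, the resulting quantity $\frac{1}{t^2}\bigl(\sum_i\prod_j\cos(\cdot)\bigr)^2$ is genuinely controlled by $\frac{1}{t^2}\bigl(\sum_i|\cos\frac{\pi k_i g_{j_0}(\sigma)}{m}|\bigr)^2$ and that the latter is what the ``good set'' definition bounds. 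The cleanest route is to observe that the branches $\ket{i}$ with $g_{j_0}$ contributing a sign-varying cosine are precisely the situation Definition~\ref{good-set} handles, up to the factor-of-two rescaling of the modulus coming from the $\pi$ (rather than $2\pi$) in the generalized fingerprint angles; I would state this rescaling explicitly and then the rest is the same estimate as in Theorem~\ref{LinearPolynomialComputation}. The implementation of the controlled rotations (that the angles depend on $i$ through $k_i$, realized by $\log t$-controlled $\hat y$-rotations acting in each subspace) is routine given the circuit already exhibited in the previous proof, so I would only sketch it.
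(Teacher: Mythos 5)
Your circuit construction and qubit count match the paper's, but your measurement step deviates from the paper's in a way that breaks the error analysis. You apply a final $H^{\otimes\log t}$ to the index register and accept only on the all-zero outcome of \emph{all} qubits, so your acceptance probability is $\frac{1}{t^2}\bigl(\sum_{i=1}^t\prod_{j=1}^l\cos\frac{\pi k_i g_j(\sigma)}{m}\bigr)^2$, the square of a single coherent sum of \emph{products} of cosines. The ``good set'' property (Definition~\ref{good-set}) only controls sums of the form $\sum_i\cos\frac{2\pi k_i b}{m}$ for a fixed nonzero $b$; it says nothing about $\sum_i\prod_j\cos(\cdot)$, and your fallback bound $\bigl|\prod_j\cos\bigr|\le\bigl|\cos\frac{\pi k_ig_{j_0}(\sigma)}{m}\bigr|$ destroys the sign cancellation that the good set provides, leaving you with a sum of absolute values that can be of order $t$. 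Concretely, expanding $\prod_j\cos\theta_j$ by product-to-sum identities produces terms $\cos\bigl(\frac{\pi k_i}{m}\sum_j\pm g_j(\sigma)\bigr)$, and the signed combination $\sum_j\pm g_j(\sigma)$ can vanish even when some $g_{j_0}(\sigma)\neq0$: e.g.\ if $g_1(\sigma)=g_2(\sigma)=b\neq0$ on some $\sigma\in f^{-1}(0)$, then with $l=2$ one gets $\frac{1}{t}\sum_i\prod_j\cos=\frac12+\frac{1}{2t}\sum_i\cos\frac{2\pi k_ib}{m}\ge\frac{1-\sqrt\epsilon}{2}$, so your acceptance probability is about $1/4$, not $<\epsilon$. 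So the one-sided error $\epsilon$ you claim is not achievable by this route.

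The paper avoids this entirely: it applies no final Hadamards to the index register and accepts based only on the last $l$ qubits, so the index register decoheres the branches and the acceptance probability is the \emph{incoherent} average $\frac{1}{t}\sum_{i=1}^t\prod_{j=1}^l\cos^2\frac{\pi k_i g_j(\sigma)}{m}$. Now dropping all factors except $j_0$ is harmless, and the double-angle identity $\cos^2x=\frac{1+\cos 2x}{2}$ turns the remaining sum into exactly the quantity the good set controls (with the original modulus $m$, no rescaling needed), giving the bound $\frac12+\frac{\sqrt\epsilon}{2}$ on erroneous acceptance. Note this is weaker than your claimed $\epsilon$ --- the theorem only asserts computability, with 1-inputs accepted with certainty and 0-inputs accepted with probability bounded away from $1$. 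To repair your write-up, either switch to the paper's measurement (and accept the $\frac12+\frac{\sqrt\epsilon}{2}$ bound), or supply a genuinely new argument controlling the coherent sum of products, which Definition~\ref{good-set} alone does not give you.
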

\begin{proof}
The proof of this result somewhat generalizes the
proof of Theorem \ref{LinearPolynomialComputation} to the case of
several target qubits. Here is the circuit:
$$\quad\quad\quad
\Qcircuit @C=0.75em @R=1.0em {
\lstick{x_1} & \cw & \control\cw\cwx[3] & \cw & ~_{\cdots}\quad & \cw & \cw & ~_{\cdots}\quad & \cw & \cw & ~_{\cdots}\quad & \cw & \cw & \cw & \cw\\
\vdots \\
\lstick{x_n} & \cw & \cw & \cw & ~_{\cdots}\quad & \control\cw\cwx[1] & \cw & ~_{\cdots}\quad & \cw & \cw & ~_{\cdots}\quad & \cw & \cw & \cw & \cw\\
\lstick{\ket{\phi_1}} & \gate{H} & \ctrlo{1} & \qw  & ~_{\cdots}\quad & \ctrl{1} & \qw & ~_{\cdots}\quad & \ctrlo{1} & \qw  & ~_{\cdots}\quad & \ctrl{1} & \qw & \meter & \qw\\
\lstick{\ket{\phi_2}} & \gate{H} & \ctrlo{2} & \qw  & ~_{\cdots}\quad & \ctrl{2} &  \qw & ~_{\cdots}\quad & \ctrlo{2} & \qw  & ~_{\cdots}\quad & \ctrl{2} & \qw & \meter & \qw\\
\vdots &&\ustick{\quad\quad\quad^{\ket{1}}} &&& \ustick{\quad\quad~~^{\ket{t}}}&&& \ustick{\quad\quad\quad^{\ket{1}}}&&& \ustick{\quad\quad~~^{\ket{t}}}\gategroup{4}{3}{7}{3}{1em}{\}}\gategroup{4}{6}{7}{6}{1em}{\}}\gategroup{4}{9}{7}{9}{1em}{\}}\gategroup{4}{12}{7}{12}{1em}{\}} \\
\lstick{\ket{\phi_{\, \log t}}} & \gate{H} & \ctrlo{1} & \qw & ~_{\cdots}\quad & \ctrl{1}& \qw & ~_{\cdots}\quad & \ctrlo{1} & \qw  & ~_{\cdots}\quad & \ctrl{1} & \qw & \meter & \qw\\
\lstick{\ket{\phi^1_{t\!a\!r\!g\!e\!t}}} & \qw & \gate{R^1_{1,1}}\qwx[1] & \qw & ~_{\cdots}\quad & \gate{R^1_{t,n}}\qwx[1] & \qw & ~_{\cdots}\quad & \gate{R^1_{1,0}}\qwx[1] & \qw & ~_{\cdots}\quad & \gate{R^1_{t,0}}\qwx[1] & \qw & \meter & \qw\\
\lstick{\ket{\phi^2_{t\!a\!r\!g\!e\!t}}} & \qw & \gate{R^2_{1,1}}\qwx[2] & \qw & ~_{\cdots}\quad & \gate{R^2_{t,n}}\qwx[2] & \qw & ~_{\cdots}\quad & \gate{R^2_{1,0}}\qwx[2] & \qw & ~_{\cdots}\quad & \gate{R^2_{t,0}}\qwx[2] & \qw & \meter & \qw\\
\vdots \\
\lstick{\ket{\phi^l_{t\!a\!r\!g\!e\!t}}} & \qw & \gate{R^l_{1,1}} & \qw & ~_{\cdots}\quad & \gate{R^l_{t,n}} & \qw & ~_{\cdots}\quad & \gate{R^l_{1,0}} & \qw & ~_{\cdots}\quad & \gate{R^l_{t,0}} & \qw & \meter & \qw\\
}
$$
\begin{enumerate}
\item Upon the input $\sigma=\sigma_1\ldots\sigma_n$ we create the
fingerprint $\ket{h_\sigma}$.
\item We measure $\ket{h_\sigma}$ in the standard computational
basis and accept the input if the outcome of the last $l$ qubits
is the all-zero state. Thus, the probability of accepting
$\sigma$ is
$$Pr_{accept}(\sigma) = \frac{1}{t}\sum\limits_{i=1}^t\cos^2\frac{
  \pi k_i g_1(\sigma)}{m}\cdots \cos^2\frac{
  \pi k_i g_l(\sigma)}{m}.$$

If $f(\sigma)=1$ then all of $g_i(\sigma)=0$ and we will always
accept.

If $f(\sigma)=0$ then there is at least one such $j$ that
$g_j(\sigma)\neq0$ and the choice of the ``good'' set $K$
guarantees that the probability of the erroneously accepting is
bounded by
$$\begin{array}{rcl}
Pr_{accept}(\sigma) & = & \frac{1}{t}\sum\limits_{i=1}^t\cos^2\frac{
  \pi k_i g_1(\sigma)}{m}\cdots \cos^2\frac{\pi k_i g_l(\sigma)}{m}\\
& \leq & \frac{1}{t}\sum\limits_{i=1}^t\cos^2\frac{\pi k_i
g_j(\sigma)}{m} =
  \frac{1}{t}\sum\limits_{i=1}^t\frac{1}{2}\left(1+\cos\frac{
  2\pi k_i g_j(\sigma)}{m}\right) \\
& = & \frac{1}{2}+\frac{1}{2t}\sum\limits_{i=1}^t\cos\frac{2\pi k_i g_j(\sigma)}{m}\\
& \leq & \frac{1}{2} + \frac{\sqrt{\epsilon}}{2}.
\end{array}$$
\end{enumerate}

The number of qubits used by this QBP $Q$ is $qubits(Q)=O(\log\log m + |\chi_f|)$.
Therefore, the width of the program is
$2^{qubits(Q)}=O(2^{|\chi_f|}\log m)$.

\end{proof}

Note that though this upper bound is exponential our approach can be
effectively used when the size of a characteristic is
$O(\log\log{m})$ and $m=2^{n^{O(1)}}$. That is, in this case we will
stay within the polynomial width.

The Theorem \ref{LinearCharacteristicComputation} has two immediate
consequences which might be useful for proving upper bounds.
\begin{corollary}
If a Boolean function $f=f_1\wedge f_2\wedge\ldots \wedge f_s$ is a conjunction of
$s=O(\log n)$ Boolean functions $f_i$, each having a linear
characteristic polynomial over ${\mathbb Z}_{2^n}$, then $f$ can be
computed by an $O(\log{n})$-qubit quantum OBDD.
\end{corollary}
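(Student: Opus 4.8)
The plan is simply to package the hypothesis into a single linear characteristic of $f$ and then invoke Theorem~\ref{LinearCharacteristicComputation} with the right parameters. First I would fix, for each $i\in\{1,\dots,s\}$, a linear characteristic polynomial $g_i$ of $f_i$ over $\mathbb{Z}_{2^n}$ (these exist by hypothesis), and consider the set $\chi_f=\{g_1,\dots,g_s\}$. I claim $\chi_f$ is a linear characteristic of $f$: for every $\sigma\in\{0,1\}^n$ we have $f(\sigma)=1$ iff $f_i(\sigma)=1$ for all $i$ (because $f=f_1\wedge\dots\wedge f_s$), and the latter holds iff $g_i(\sigma)=0$ for all $i$ (by the defining property of each $g_i$). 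Hence the common zero set of the family $\chi_f$ is exactly $f^{-1}(1)$, which is what the definition of a characteristic demands; moreover every $g_i$ is linear and all of them live over the common ring $\mathbb{Z}_{2^n}$, so $\chi_f$ is a \emph{linear} characteristic with $|\chi_f|=s$.

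Next I would apply Theorem~\ref{LinearCharacteristicComputation} to $\chi_f$ with $m=2^n$. It produces a quantum OBDD $Q$ computing $f$ (with probability $1$ on $f^{-1}(1)$ and probability at most $\tfrac12+\tfrac{\sqrt\epsilon}{2}<1$ on $f^{-1}(0)$ for any chosen constant $\epsilon<1$) of width $O(2^{|\chi_f|}\log m)=O(2^{s}\cdot n)$. Since $s=O(\log n)$ we have $2^{s}=n^{O(1)}$, so $\mathrm{width}(Q)=n^{O(1)}$ is polynomial and therefore $qubits(Q)=\lceil\log\mathrm{width}(Q)\rceil=O(\log n)$, as claimed.

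The argument is short because all the work was done in Theorem~\ref{LinearCharacteristicComputation}; the only point that needs checking is the first step, namely that collecting the per-component characteristic polynomials yields a genuine characteristic of the conjunction, and this is precisely the joint (\quotes{all $g_j(\sigma)=0$}) reading of a characteristic that the proof of that theorem already relies on. I do not expect any real obstacle here: the rest is just substituting $m=2^n$ and $|\chi_f|=s=O(\log n)$ into the width bound already established, and observing that this is exactly the regime ($|\chi_f|=O(\log\log m)$, $m=2^{n^{O(1)}}$) in which that bound stays polynomial.
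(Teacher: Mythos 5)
Your proof is correct and is exactly the derivation the paper intends: the paper states this corollary as an immediate consequence of Theorem~\ref{LinearCharacteristicComputation}, obtained by taking $\chi_f=\{g_1,\dots,g_s\}$ with $m=2^n$ and $|\chi_f|=s=O(\log n)$. You are also right that the definition of a characteristic must be read in the joint sense (all $g_j(\sigma)=0$ iff $f(\sigma)=1$), which is the reading the proof of that theorem actually uses, so no gap remains.
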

\begin{corollary}
If a Boolean function $f=f_1\vee f_2\vee\ldots\vee f_s$ is a
disjunction of $s=O(\log n)$ Boolean functions $f_i$ and the
negation of each has a linear characteristic polynomial over
${\mathbb Z}_{2^n}$, then $f$ can be computed by an
$O(\log{n})$-qubit quantum OBDD.
\end{corollary}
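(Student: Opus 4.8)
The plan is to reduce the statement to the machinery already built, by passing to the complement. By De~Morgan's law, $\neg f = \neg f_1 \wedge \neg f_2 \wedge \cdots \wedge \neg f_s$. By hypothesis each $\neg f_i$ has a linear characteristic polynomial $g_i$ over $\mathbb{Z}_{2^n}$, so $g_i(\sigma)=0$ iff $\neg f_i(\sigma)=1$, i.e.\ iff $f_i(\sigma)=0$. Hence for every $\sigma\in\{0,1\}^n$ we have: all $g_i(\sigma)=0$ $\iff$ all $f_i(\sigma)=0$ $\iff$ $f(\sigma)=0$ $\iff$ $\neg f(\sigma)=1$. Therefore $\chi_{\neg f}=\{g_1,\ldots,g_s\}$ is a \emph{linear characteristic of the function $\neg f$}, with $|\chi_{\neg f}|\le s=O(\log n)$. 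So the first step is just this De~Morgan bookkeeping together with the definition of a characteristic.

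Next I would apply Theorem~\ref{LinearCharacteristicComputation} to $\neg f$ with $m=2^n$. This produces a quantum OBDD $Q'$ of width $O(2^{|\chi_{\neg f}|}\log m)=O(2^{O(\log n)}\cdot n)=n^{O(1)}$, hence with $qubits(Q')=O(\log n)$, that accepts every $\sigma$ with $\neg f(\sigma)=1$ with certainty and accepts every $\sigma$ with $\neg f(\sigma)=0$ with probability at most $\tfrac12+\tfrac{\sqrt\epsilon}{2}$. Exactly as for Theorem~\ref{LinearCharacteristicComputation} itself, to obtain a genuine bounded-error program one runs a constant number $r$ of independent copies of $Q'$ in parallel (a tensor power) and accepts only if all copies accept; this pushes the error on the negative side strictly below $\tfrac12$ while multiplying the width only by the constant $2^r$, so one still has an $O(\log n)$-qubit QOBDD $Q''$ that computes $\neg f$ with bounded error, with one-sided error on the $\neg f^{-1}(1)$ side.

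Finally, I would build a program for $f$ from $Q''$ by replacing its accepting index set $Accept$ with its complement $\{1,\ldots,d\}\setminus Accept$; this alters only the final projective measurement, so the result is still a legal QOBDD of the same width and the same number of qubits. Since the new acceptance probability is $1$ minus the old one on every input, the bounded-error guarantee for $\neg f$ transfers verbatim: inputs with $f(\sigma)=0$ are now accepted with probability $\le\tfrac12-\epsilon'$ and inputs with $f(\sigma)=1$ with probability $\ge\tfrac12+\epsilon'$, which is exactly bounded-error computation of $f$ by an $O(\log n)$-qubit quantum OBDD. The one place that needs care is the middle step: the raw bound coming out of Theorem~\ref{LinearCharacteristicComputation} is not automatically below $\tfrac12$ on the negative side, so one must amplify \emph{before} complementing and check that the amplification preserves polynomial width; everything else follows immediately from De~Morgan and the earlier results.
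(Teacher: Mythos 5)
Your proof follows the paper's intended route exactly: the paper's entire justification for this corollary is the remark that $\neg f=\neg f_1\wedge\cdots\wedge\neg f_s$, after which the set $\{g_1,\ldots,g_s\}$ of linear characteristic polynomials of the $\neg f_i$ is a linear characteristic of $\neg f$ and Theorem~\ref{LinearCharacteristicComputation} with $m=2^n$ and $|\chi_{\neg f}|=O(\log n)$ yields a polynomial-width, $O(\log n)$-qubit QOBDD. Your extra care in the middle step is warranted and is a genuine improvement in rigor over the paper, which silently skips both the amplification and the complementation: the $\tfrac12+\tfrac{\sqrt\epsilon}{2}$ bound on negative instances is not by itself below $\tfrac12$, so one must amplify before flipping the accepting set, and your parallel repetition is legitimate in the read-once model because the unitaries of all $r$ copies controlled by the same variable can be folded into a single instruction, preserving a single test per variable while multiplying the width only by a constant.
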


The last corollary uses the fact that $\neg f=\neg f_1\wedge\neg
f_2\wedge \ldots\wedge\neg f_s$.

The generalized approach can be used to construct an effective
quantum OBDD for the Boolean variant of the {\em Hidden Subgroup
Problem} \cite{ablayev-vasiliev:2009:EPTCS}.

\section{Using reductions for fingerprinting}

\emph{Reduction} is a well-known concept in complexity theory. The
most investigated reduction for the model of branching programs is
the technique of polynomial projections \cite{Wegener:2000:BP}.

\begin{definition}
The sequence $(f_n)$ of Boolean functions is a {polynomial
projection} of $(h_n)$, $(f_n)\leq_{\mbox{proj}}(h_n)$, if {$f_n =
h_{p(n)}(y_1, \ldots, y_{p(n)})$} for some polynomial $p$ and
$y_j\in\{0,1,x_1, \overline{x}_1, \ldots, x_n, \overline{x}_n\}$.
The number of $j$ such that $y_j\in\{x_i,\overline{x}_i\}$ is called
the multiplicity of $x_i$.
\end{definition}
%
%
%
%

Note that this type of reduction keep the linearity of the
corresponding characteristic polynomials and thus the effective
computability via the fingerprinting method.

\begin{lemma}
If {$(f_n)\leq_{\mbox{proj}}(h_n)$} and each $h_n$ can be represented by a linear
characteristic, then each $f_n$ also has a linear characteristic.
\end{lemma}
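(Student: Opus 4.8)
The plan is to take the given linear characteristic $\chi_{h_{p(n)}}=\{g_1,\ldots,g_s\}$ of $h_{p(n)}$ over $\mathbb{Z}_m$ and substitute the projection into each polynomial, then check that the resulting set of polynomials is (a) still linear, and (b) still a characteristic of $f_n$. Concretely, since $(f_n)\leq_{\mbox{proj}}(h_n)$ there is a polynomial $p$ and values $y_1,\ldots,y_{p(n)}\in\{0,1,x_1,\overline{x}_1,\ldots,x_n,\overline{x}_n\}$ with $f_n(x_1,\ldots,x_n)=h_{p(n)}(y_1,\ldots,y_{p(n)})$. For each $g_j(z_1,\ldots,z_{p(n)})=c_0+\sum_{k=1}^{p(n)}c_k z_k$ in $\chi_{h_{p(n)}}$, define $g_j'(x_1,\ldots,x_n)$ by replacing $z_k$ with $y_k$, i.e. replacing $z_k$ by $0$, by $1$, by $x_i$, or by $1-x_i$ according to the value of $y_k$.

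First I would observe that $g_j'$ is still a linear polynomial over $\mathbb{Z}_m$: each substitution $z_k\mapsto 0$, $z_k\mapsto 1$, $z_k\mapsto x_i$, $z_k\mapsto 1-x_i$ is itself an affine (degree $\le 1$) expression in the $x_i$'s, and an affine substitution into a linear polynomial yields a linear polynomial — constants get absorbed into $c_0$, and the coefficients of the $x_i$'s are sums of the original $c_k$'s (with appropriate signs), all computed in $\mathbb{Z}_m$. So $\chi_{f_n}:=\{g_1',\ldots,g_s'\}$ consists of linear polynomials over the same ring $\mathbb{Z}_m$.

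Next I would verify the characteristic property: for every $\sigma\in\{0,1\}^n$ and every $j$, $g_j'(\sigma)=0$ iff $f_n(\sigma)=1$. The key point is that evaluating $g_j'$ at $\sigma$ gives exactly $g_j(\tau)$, where $\tau\in\{0,1\}^{p(n)}$ is the string obtained by evaluating $y_1,\ldots,y_{p(n)}$ at $\sigma$ (here we use that each $y_k$ evaluates to a bit in $\{0,1\}$, so $\tau$ is a genuine Boolean input to $h_{p(n)}$). By the definition of projection, $f_n(\sigma)=h_{p(n)}(\tau)$, and by hypothesis $\{g_1,\ldots,g_s\}$ is a characteristic of $h_{p(n)}$, so $g_j(\tau)=0$ iff $h_{p(n)}(\tau)=1$. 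Chaining these equalities gives $g_j'(\sigma)=0$ iff $f_n(\sigma)=1$, which is precisely the condition in the definition of a characteristic. Hence $\chi_{f_n}$ is a linear characteristic of $f_n$.

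The only subtlety — and the one place that needs a careful word rather than a routine check — is making sure the substitution commutes with evaluation, i.e. that substituting the formal expressions $y_k$ into $g_j$ and then evaluating at $\sigma$ gives the same element of $\mathbb{Z}_m$ as first evaluating the $y_k$ at $\sigma$ to get bits $\tau_k$ and then evaluating $g_j$ at $\tau$. This is just the homomorphism property of polynomial evaluation / composition over a commutative ring, but it is worth stating explicitly because it is what ties the projection on the function side to the projection on the polynomial side. Everything else is bookkeeping. Note also that the size of the characteristic is unchanged, $|\chi_{f_n}|=|\chi_{h_{p(n)}}|$, so by Theorem \ref{LinearCharacteristicComputation} effective computability via fingerprinting is preserved whenever it held for $(h_n)$.
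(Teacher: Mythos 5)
Your proof is correct and follows essentially the same route as the paper: substitute the projection's values $0$, $1$, $x_i$, $1-x_i$ into each linear polynomial of the characteristic and observe that linearity is preserved. You additionally spell out the verification that the substituted polynomials still form a characteristic of $f_n$ (via the commutation of substitution and evaluation), a step the paper leaves implicit; this is a welcome completion rather than a divergence.
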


\begin{proof}
The variable substitutions corresponding to projection are $y_i=0$,
$y_i=1$, $y_i=x_{j_i}$, $y_i=1-x_{j_i}$. Obviously, if a
characteristic is linear then substitutions of these types to
polynomials $g=c_1y_1+\ldots c_ny_n+c_0$ keep them linear.
\end{proof}

On the other hand, the a special case of polynomial projections can
be used to prove lower bounds in the OBDD model.

\begin{definition}
A projection is called a read-once projection, $(f_n)\leq_{\mbox{rop}}(h_n)$,
if the multiplicity of each variable is bounded by one.
\end{definition}

The relation $\leq_{\mbox{rop}}$ is reflexive and transitive,
moreover if $(f_n)\leq_{\mbox{rop}}(h_n)$ and the OBDD size of $h_n$
is bounded by the polynomial $q(n)$ then the OBDD size of $f_n$ is
bounded by the polynomial $q(p(n))$. The last property was proved
for the classical OBDD model (see, e.g. \cite{Wegener:2000:BP}), but
it can be proved for the quantum setting in an analogous way.

Thus the read-once projections may be used for proving lower bounds
of the subclasses of Boolean functions whose projections are
exponentially hard for the OBDD model. These are, for example,
\emph{Set-Disjointness} and \emph{Neighbored Ones} Boolean functions
which were proved to be exponentially hard in
\cite{Sauerhoff-Sieling:2005:QBP}.

In particular these subclasses of ``hard'' functions cannot be
effectively computed via quantum fingerprinting method in the model
of read-once quantum branching programs.

Overall, the technique of polynomial projections outlines the limits of
our fingerprinting method in the quantum OBDD model with upper
bounds propagated by the general polynomial projections and lower
bounds based on individual ``hard'' functions and read-once
projections.

\bibliography{references}

\end{document}